\newtheorem{theorem}{Theorem}
\newtheorem{conjecture}{Conjecture}
\date{September 2009}
\author{
Fabien Mathieu (Orange Labs, France)%
}
\title{Heterogeneity in\\ Distributed Live Streaming:\\ Blessing or Curse?}
\begin{document}
\maketitle   

\begin{executive}
Distributed live streaming has brought a lot of interest in the past few years. In the homogeneous case (all nodes having the same capacity), many algorithms have been proposed, which have been proven almost optimal or optimal. On the other hand, the performance of heterogeneous systems is not completely understood yet.

In this paper, we investigate the impact of heterogeneity on the achievable delay of chunk-based live streaming systems. We propose several models for taking the atomicity of a chunk into account. For all these models, when considering the transmission of a single chunk, heterogeneity is indeed a ``blessing'', in the sense that the achievable delay is always faster than an equivalent homogeneous system. But for a stream of chunks, we show that it can be a ``curse'': there is systems where the achievable delay can be arbitrary greater compared to equivalent homogeneous systems. However, if the system is slightly bandwidth-overprovisionned, optimal single chunk diffusion schemes can be adapted to a stream of chunks, leading to near-optimal, faster than homogeneous systems, heterogeneous live streaming systems.
\end{executive}

\tableofcontents
\newpage

\section{Introduction}

Recent years have seen the proliferation of live streaming content diffusion over the Internet. In order to manage large audience, many distributed scalable protocols have been proposed and deployed on peer-to-peer or peer-assisted platforms~\cite{coolstreaming,pplive,sopcast,tvants,uusee}. Most of these systems rely on a chunk-based architecture: the stream is divided into small parts, so-called chunks, that have to be distributed independently in the system.

The measurements performed on distributed P2P platforms have shown that these platforms are highly heterogeneous with respect to the shared resources, especially the upload bandwidth~\cite{saroiu2002measurement,conf/globecom/CiulloMML08}. However, except for a few studies (see for instance~\cite{liu07minimum,liu08performance}), most of the theoretical research has been devoted to the analysis of homogeneous systems, where all the peers have similar resources.

At first sight, it is not clear whether heterogeneity should be positive or negative for a live streaming system. On the one hand, some studies on live streaming algorithms have reported a degradation of the performance when considering heterogeneous scenarios~\cite{bonald08epidemic}. On the other hand, consider these two toy scenarios:
\paragraph*{Homogeneous} a source injects a live stream at a rate of one chunk per second into a system of $n$ peers, each peer having an upload bandwidth of one chunk per second. Then the best achievable delay to distribute the stream is $\lceil \log_2(n)\rceil$ seconds~\cite{bonald08epidemic}.
\paragraph*{Centralized} same as above, except that one peer has an upload bandwidth of $n$ chunks per second, and the others have no upload capacity. Then the stream can obviously be distributed within one second.

The total available bandwidth is the same in both scenarios, and the centralized one can be seen as an extremely heterogeneous ``distributed'' scenario, so this simple example suggests that heterogeneity should improve the performance of a live streaming system.

In this paper, we propose to give a theoretical background for the feasible performance of distributed, chunk-based, heterogeneous, live streaming systems. The results proposed here are not meant to be directly used in real systems, but they are tight explicit bounds, that should serve as landmarks for evaluating the performance of such systems, and that can help to understand if heterogeneity is indeed a ``blessing'' or a ``curse'', compared to homogeneity.

\subsection{Contribution}

We propose a simple framework for evaluating the performance of chunk-based live streaming systems. Several variant are proposed, depending on whether multi-sources techniques are allowed or not, and on the possible use of parallel transmissions. For the problem of the optimal transmission of a single chunk, we give the exact lower bounds for all the considered variants of the model. These bounds are obtained either with an explicit closed formula or by means of simple algorithms. Moreover, the bounds are compared between themselves and to the homogeneous case, showing that heterogeneity is an improvement for the single chunk problem. For the transmission of a stream of chunks, we begin by a feasibility result that states that if there is enough available bandwidth, a system can achieve lossless transmission within a finite delay. However, we provide very contrasted results for the precise delay performance of such systems: on the one hand, we show that there are bandwidth-over-provisioned systems that need a $\Omega(N)$ transmission delay, whereas equivalent homogeneous systems only need $O(\log(N))$; on the other hand, we give simple, sufficient conditions that allows to relate the feasible stream delay to the optimal single-chunk delay.

The rest of the paper is organized as follows. Section~\ref{sec:model} presents our model and notation, and Section~\ref{sec:related} presents the related work. Then Section~\ref{sec:single} presents the bounds for the diffusion of one single chunk, while Section~\ref{sec:stream} considers the case of a stream of chunks. Section~\ref{sec:conclu} concludes.

\section{Model}
\label{sec:model}

We consider a distributed live streaming system consisting of $N$ entities, called peers. A source injects some live content into the system, and goal is that all peers receive that content with a minimal delay. We assume no limitation on the overlay graph, so any peer can potentially transmit a chunk to any other peer (full mesh connectivity).

\subsection{Chunk-based diffusion}
\label{model:chunk}

The content is split into atomic units of data called \emph{chunks}. Chunk decomposition is often used in distributed live streaming systems, because it allows more flexible diffusion schemes: peers can exchange maps of the chunks they have/need, and decide on-the-fly of the best way to achieve the distribution. The drawback is the induced data quantification. Following a standard approach~\cite{bonald08epidemic}, we propose to model this quantification by assuming that a peer can only transmit a chunk if it has received a complete copy of it.

For simplicity, we assume that all chunks have the same size, which we use as data unit.

\subsection{Capacity constraints}
\label{model:capacity}

We assume an upload-constrained context, where the transmission time depends only upon the upload bandwidth of the sending peer: if a peer $i$ has upload bandwidth $u_i$ (expressed in chunks per second), the transmission time for $i$ to deliver a chunk to any other peer is $\frac{1}{u_i}$. Without loss of generality, we assume that the peers are sorted decreasingly by their upload bandwidths, so we have $u_1\geq u_2\geq ... \geq u_N\geq 0$. We also assume that the system has a non-null upload capacity ($u_1>0$).

For simplicity, we assume that there is no constraint on the download capacity of a peer, but we will discuss the validity of that assumption later on.

\subsection{Collaborations}
\label{model:collaborations}

We also need to define the degree of collaboration enabled for the diffusion of one chunk, i.e. how many peers can collaborate to transmit a chunk to how many peers simultaneously. The main models considered in this paper are:
\paragraph{Many-to-one (short notation: $(\infty/1)$)} The $(\infty/1)$ model allows an arbitrary number of peers to collaborate when transmitting a chunk to a given peer message. three peers $i$, $j$, $k$ can collaborate to transmit a chunk they have to a fourth peer in a time $\frac{1}{u_i+u_j+u_k}$. The $(\infty/1)$ model may not be very practical, because it allows $N-1$ peers to simultaneously collaborate for one chunk, which can generate synchronization issues and challenge the assumption that download is not a constraint (the receiving peer must handle the cumulative bandwidths of the emitters). However, it has a strong theoretical interest, as it encompasses more realistic models. Therefore the $(\infty/1)$ bounds can serve as landmark for the other models.
\paragraph{One-to-one (short notation: $(1/1)$)}
In the $(1/1)$ model, a chunk transmission is always performed by a single peer: if at some time, three peers $i$, $j$ and $k$ have a chunk and want to transmit it, they must select three distinct receivers, which will receive the message after $\frac{1}{u_i}$, $\frac{1}{u_j}$ and $\frac{1}{u_k}$ seconds respectively. The connectivity and download bandwidth burdens are considerably reduced in that model. Note that $(1/1)$ is included in $(\infty/1)$ (any algorithm that works under $(1/1)$ is valid in $(\infty/1)$).
\paragraph{One-to-some (short notation: $(1/c)$)} The models above implicitly assume that a given peer transmits chunks sequentially, but for technical reasons, practical systems often try to introduce some parallelism in the transmission process: pure serialization can lead to a non-optimal use of the sender's transmission buffer, for instance in case of connectivity or node failures. We propose the $(1/c)$ model to take parallelism into account: a transmitting peer $i$ always splits its upload bandwidth into $c$ distinct connections of equal capacity. We model a price for the use of parallelism, by assuming that these connections cannot be aggregated. That means that a peer $i$ can transmit to up to $c$ receivers simultaneously, but it always needs $\frac{c}{u_i}$ seconds to transmit the message to any given peer. Note that any algorithm that works in the $(1/c)$ model can be emulated in the $(1/1)$ model.

\subsection{Single chunk / stream of chunks diffusion delays}
\label{model:singlestream}

In order to study the achievable diffusion delay of the system, we propose a two step approach: we first consider the feasible delay for the transmission of a single chunk, then we investigate how this can be related to the transmission delay of a stream of chunks.

In the single chunk transmission problem, we assume that at time $t=0$, $n_0$ copies of a newly created chunk are delivered to $n_0$ carefully selected distinct peers ($1\leq n_0 \leq N$), and we want to know the minimal delay $D(n)$ needed for $n$ copies of the chunk to be available in the system.  Note that as the system has a non-null upload capacity, $n$ copies can always be made in a finite time, so $D$ is well defined. The main value of interest is $D(N)$ (time needed for all peers to get a copy of the chunk), but $n>N$ can also be considered for theoretical purposes (we assume then that the extra copies are transmitted to dummy nodes with null upload capacity). We use the notation $D_m$, $D_1$ or $D_c$ depending on the model used (many-to-one, one-to-one or one-to-$c$ respectively).

In the stream of chunk problem, new chunks are created at a given rate $s$ (expressed in chunks per second) and injected with redundancy $n_0$. In other words, every $\frac{1}{s}$ seconds, $n_0$ copies of a newly created chunk are delivered to $n_0$ carefully selected peers. The (possibly infinite) stream is \emph{feasible} if there is a diffusion scheme that insures a lossless transmission within a bounded delay. It means that there is a delay such that \emph{any} chunk, after being injected in the system, is available to the $N$ peers within that delay. For a given feasible stream, we call $\tilde{D}$ (or $\tilde{D_m}$, $\tilde{D_1}$, $\tilde{D_c}$ if the underlying model must be specified) the corresponding minimal achievable delay. Obviously, $D$ is a lower bound for $\tilde{D}$.

\section{Related work}
\label{sec:related}

The problem of transmitted a message to all the participants (broadcast) or a subset of it (multicast) in a possibly heterogeneous capacity-constraint environment is not new. A few years ago, so-called \emph{networks of workstations} have been the subject of many studies~\cite{hall98scheduling,banikazemi98efficient,hadas01multicast,fraigniaud05efficient}. However, most of the results presented in those studies were too generic for presenting a direct interest for the chunk-based live streaming problem.

As far as we know, the work that is probably the closest to ours has been made by Yong Liu~\cite{liu07minimum}. For the single chunk problem, Liu has computed $D_1$ in specific scenarios, and he gave some (non tight) bounds for the general case. For the stream problem, he gave some insight on the delay distribution when the capacities are random, independent variables. Liu's study is more complete than ours for specific scenarios and implementation, but we provide tighter results for the general case, where no assumption is made but the $(*/*)$ model.

There is also two closely related problems for which theoretical analysis and fundamental limitations have been considered: the chunk-based, homogeneous, live streaming problem and the stripe-based, possibly heterogeneous live streaming problem.

For chunk-based homogeneous systems, the main result is that if the peers have enough bandwidth to handle the streamrate ($u\geq s$), then the stream problem is feasible for the $(1/1)$ model and we have $D_1=\tilde{D}_1=\frac{1}{u}\log_2(\frac{N}{n_0})$ (see for instance~\cite{liu07minimum}). The intuitive idea is that as all peers have the same bandwidth values, they can exchange their place in a diffusion tree without changing the performance of that tree. This allows to use the optimal diffusion tree for each new chunk introduced in the system: when a chunk is an internal node of the tree of a given chunk $i$, he just have to be a leaf in the trees of the next nodes until the diffusion of $i$ is complete. Of course, this \emph{permutation} technique cannot be used in a heterogeneous case.

The stripe-based model consists in assuming that the stream of data can be divided into arbitrary small sub-streams, called stripes. There is no chunk limitation in that model, therefore the transmission of data between nodes is only delayed by latencies. The upload capacity is still a constraint, but it only impacts the amount of stream that a peer can relay. A pretty complete study for the performance bounds of stripe-based systems is available in~\cite{liu08performance}. It shows that as long as there is enough bandwidth to sustain the stream (meaning, with our notation, $n_0+\frac{1}{s}\sum_{i=1}^Nu_i\geq N$), the stream can be diffused within a minimal delay. In Section~\ref{sec:stream}, we will show that this feasibility result can be adapted to the chunk-based model, although the delay tends to explode in the process.

\section{Single chunk diffusion}
\label{sec:single}

As expressed in \S~\ref{model:singlestream}, $D$ is a lower bound for $\tilde{D}$, so it is interesting to understand the single chunk problem. Moreover, as we will see in the next section, an upper bound for $\tilde{D}$ can also be derived from $D$ on certain conditions.

\subsection{\texorpdfstring{$(\infty/1)$ diffusion}{(infinity/1) diffusion}}

We first consider the many-to-one assumption, where collaboration between uploaders is allowed. Under this assumption, we can give an exact value for the minimal transmission delay.

\begin{theorem}
\label{thm:dmin}
Let $U_k$ be the cumulative bandwidth of the $k$ best peers ($U_k=\sum_{i=1}^ku_i$). Then the minimal transmission delay $D_m$ is given by
\begin{equation}
	D_{m}(n)=\sum_{k=n_0}^{n-1}\frac{1}{U_{k}}\text{.}
	\label{eq:dmin_multi}
\end{equation}
\end{theorem}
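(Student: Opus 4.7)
The natural strategy is to prove matching upper and lower bounds. For the upper bound, I would exhibit an explicit optimal scheme: place the $n_0$ initial copies on the $n_0$ peers with largest upload bandwidth, then iteratively, while $k$ peers (always the top $k$) hold the chunk, let all of them jointly transmit under the $(\infty/1)$ rule to peer $k+1$. This single coordinated transmission takes time exactly $1/U_k$, and summing over $k$ from $n_0$ to $n-1$ reaches $n$ copies in the claimed $\sum_{k=n_0}^{n-1} 1/U_k$ seconds.

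For the lower bound, I would use a continuous ``bytes delivered'' potential. Let $A(t)$ denote the set of peers holding a complete chunk at time $t$, $f(t)=|A(t)|$, and let $\Phi(t)$ be the total amount of chunk data transmitted to peers not in $A(0)$ up to time $t$. Because only peers in $A(t)$ may upload and each is capped by $u_i$, the instantaneous delivery rate satisfies $\dot\Phi(t) \le \sum_{i \in A(t)} u_i \le U_{f(t)}$. Atomicity contributes the key inequality $f(t) - n_0 \le \lfloor \Phi(t) \rfloor$: each entry into $A(t)$ consumes a full unit of delivered data, while $\Phi$ also absorbs partial or aborted progress. Combined with the monotonicity of $k \mapsto U_k$, this yields the differential inequality $\dot\Phi(t) \le U_{n_0 + \lfloor \Phi(t) \rfloor}$.

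Separating variables and integrating up to the first time $T$ at which $\Phi(T) = n-n_0$ (which necessarily occurs no later than the first time $n$ peers hold the chunk) gives $T \ge \int_0^{n-n_0} dx / U_{n_0 + \lfloor x \rfloor} = \sum_{k=n_0}^{n-1} 1/U_k$. This matches the upper bound and pins down $D_m(n)$.

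The main subtlety is the atomicity inequality $f(t) - n_0 \le \lfloor \Phi(t) \rfloor$: it encodes that a peer cannot re-forward a partially received chunk, so bandwidth spent on in-progress or aborted transmissions contributes to $\Phi$ but not to $f$. This is precisely what distinguishes the chunk-based $(\infty/1)$ lower bound from the stripe-based analysis recalled in Section~\ref{sec:related}. A secondary point to handle cleanly is the meaning of multi-source collaboration: one must check that even when the peers in $A(t)$ split their uploads among several concurrent transmissions to distinct receivers, the aggregate delivery rate still does not exceed $\sum_{i \in A(t)} u_i$, which is immediate from the upload-constrained assumption.
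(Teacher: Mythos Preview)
Your argument is correct. The upper-bound construction is exactly the paper's: seed the $n_0$ best peers, then repeatedly pool all capable bandwidth on the next-best non-capable peer, spending $1/U_k$ at each step.

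For the lower bound you take a slightly different route than the paper. The paper argues informally: with capable bandwidth $U$, producing the next complete copy takes at least $1/U$, so one should maximize $U$ throughout, which forces the greedy order. Your potential-function argument via $\Phi(t)$ and the differential inequality $\dot\Phi \le U_{n_0+\lfloor\Phi\rfloor}$ is a more rigorous packaging of the same idea; in particular, it cleanly covers schedules that split upload across several simultaneous receivers or waste bandwidth on aborted transfers, cases the paper's ``one new copy every $1/U$ seconds'' phrasing glosses over. The cost is a bit more machinery; the benefit is that nothing is left to the reader, and the role of chunk atomicity (your inequality $f(t)-n_0\le\lfloor\Phi(t)\rfloor$) is made explicit.
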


\begin{proof}
We say that a given peer is \emph{capable} when it owns a complete copy of the chunk (it is capable to tranmist that chunk). If at a given time the sum of upload bandwidths of the \emph{capable} peers (i.e. with a complete copy of the message) is $U$, then the minimal time for those peer to send a complete copy of the chunk to another peer is $\frac{1}{U}$. From that observation, we deduce that maximizing $U$ during the whole diffusion is the way to obtain minimal transmission. This is achieved by injecting the $n_0$ primary copies of the message to the $n_0$ best peers, then propagating the message peer by peer, always using all the available bandwidth of capable peers and selecting the target peers in decreasing order of upload. This gives the bound.
\end{proof}

\paragraph*{Remark} in~\cite{liu07minimum}, Liu proposed $D_m$ as a (loose) lower bound for $D_1$. Indeed, $D_m$ is an absolute lower bound for any chunk-based system, because the diffusion used makes the best possible use of the available bandwidth at any time. The only way to go below $D_m$ would be to allow peers to transmit partially received chunks, which is contrary to the chunk-based main assumption. Thus $D_m$ can serve as a reference landmark for all the delays considered here. Moreover, an appealing property of $D_{m}$ is that it is a direct expression of the bandwidths of the system, so it is straightforward to compute as long as the bandwidth distribution is known.

\subsubsection{Homogeneous case}

If all peers have the same upload bandwidth $u_i=u$, we have $U_k=ku$ for $k\leq N$, so the bound $D_m$ becomes simpler to express for $n\leq N$:
\begin{equation}
	D_m(n)= \frac{1}{u} \sum_{k=n_0}^{n-1}\frac{1}{k}\text{.}
	\label{eq:delay_dm}
\end{equation}

In particular, for $N\geq n\gg n_0$, the following approximation holds:

\begin{equation}
	D_m(n)\approx\frac{\ln (\frac{n}{n_0})}{u}\text{.}
	\label{eq:delay_dm2}
\end{equation}

So in the homogeneous case, the $(\infty/1)$ transmission delay is inverse proportional to the common upload bandwidth, and grows logarithmically with the number of peers.

\subsubsection{Gain of heterogeneity}

We can compare the performance of a given heterogeneous system to the homogeneous case: let us consider a heterogeneous system with average peer bandwidth $\bar{u}$, and maximum bandwidth $u_{\max}$. As peers are sorted by decreasing bandwidth, we have $k\bar{u}\leq U_k \leq ku_{\max}$. From \eqref{eq:dmin_multi}, it follows that
\begin{equation}
D_m^{u_{\max}} \leq D_{m}\leq D_m^{\bar{u}}\text{,}
\label{eq:dm_gain}
\end{equation}
\noindent where $D_m^u$ is $D_m$ in a homogeneous system with common bandwidth $u$. In particular, by combining the previous equations, one gets
\begin{equation}
D_m(n)<\frac{1}{\bar{u}}(\ln(\frac{n-1}{n_0})+\frac{1}{n_0})\text{.}
\label{eq:gain_dm}
\end{equation}
\noindent In other words, the optimal transmission delay is smaller for a heterogeneous system than for a homogeneous system with same average peer upload bandwidth. In that sense, heterogeneity can be seen as a ``blessing'' for the transmission of one single chunk.

\subsubsection{Homogeneous classes}

Equation~\eqref{eq:delay_dm2} can be extended to the case where there is classes of peers, each class being characterized by the common value of the upload bandwidths of its peers. 

\begin{theorem}
\label{thm:dmin_classes}
We assume here that we have $l$ classes with respective population size and upload bandwidth $(n_1,u_1)$,\ldots,$(n_k,u_l)$, with $u_1>\ldots >u_l$ and $n_i\gg 1$ (large population sizes). If $n_0\leq n_1$, then we have
\begin{equation}
	D_{m}(N)\approx \frac{1}{u_1}\ln(\frac{n_1}{n_0})+\sum_{i=2}^{l}\frac{\ln(1+\frac{n_iu_i}{\sum_{j=1}^{i-1}n_ju_j})}{u_i}\text{.}
	\label{eq:dmin_classes}
\end{equation}
\end{theorem}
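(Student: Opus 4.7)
The plan is to start from the exact formula of Theorem~\ref{thm:dmin} and partition the sum $\sum_{k=n_0}^{N-1}\frac{1}{U_k}$ according to which class the $k$-th peer lies in. Introduce the cumulative population $N_i=\sum_{j=1}^{i}n_j$ and the cumulative bandwidth $V_i=\sum_{j=1}^{i}n_j u_j$, with $N_0=V_0=0$. Because peers are sorted by decreasing upload and the classes are already ordered by decreasing bandwidth, the $k$-th peer belongs to class $i$ exactly when $N_{i-1}<k\leq N_i$, and in that range $U_k=V_{i-1}+(k-N_{i-1})\,u_i$. This lets me rewrite
\begin{equation*}
D_m(N)=\sum_{k=n_0}^{N_1-1}\frac{1}{k\,u_1}+\sum_{i=2}^{l}\sum_{k=N_{i-1}}^{N_i-1}\frac{1}{V_{i-1}+(k-N_{i-1})\,u_i}.
\end{equation*}

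For the first sum I would use the classical harmonic-sum approximation: since $n_1\gg 1$, $\sum_{k=n_0}^{N_1-1}\frac{1}{k}=\ln(n_1/n_0)+O(1/n_0)$, which yields the term $\frac{1}{u_1}\ln(n_1/n_0)$. For each class $i\geq 2$ I would change variable $j=k-N_{i-1}$ and compare the resulting Riemann sum
\[
\sum_{j=0}^{n_i-1}\frac{1}{V_{i-1}+j\,u_i}
\]
to the integral
\[
\int_{0}^{n_i}\frac{dj}{V_{i-1}+j\,u_i}=\frac{1}{u_i}\ln\!\Bigl(1+\frac{n_i u_i}{V_{i-1}}\Bigr),
\]
which gives exactly the $i$-th term of \eqref{eq:dmin_classes}. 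Since the integrand is monotone, the approximation error is bounded by the first minus the last term of the sum, i.e.\ $O(1/V_{i-1})$, which is negligible under the hypothesis $n_i\gg 1$.

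Summing the class contributions then produces the claimed formula. The only substantive point to check is the error control: one must verify that the cumulative approximation errors (a constant of order $\gamma$ from the harmonic sum, plus $O(1/V_{i-1})$ from each integral comparison) are indeed dominated by the corresponding logarithmic terms $\ln(n_1/n_0)$ and $\ln(1+n_i u_i/V_{i-1})$, so that the symbol $\approx$ in the statement is justified. This is the only place where the assumption $n_i\gg 1$ is used, and it is what makes the approximation meaningful; everything else is a straightforward rewriting of Theorem~\ref{thm:dmin}.
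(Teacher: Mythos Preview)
Your proposal is correct and follows essentially the same approach as the paper: both arguments partition the exact sum of Theorem~\ref{thm:dmin} class by class, obtain for each class $i\geq 2$ the sum $\sum_{k=0}^{n_i-1}\frac{1}{V_{i-1}+k u_i}$, and approximate it by $\frac{1}{u_i}\ln(1+n_i u_i/V_{i-1})$. Your write-up is in fact somewhat more careful than the paper's, since you make the integral comparison and the error control explicit, whereas the paper simply writes the approximation step without further justification.
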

\begin{proof}
because the minimal delay is obtained by transmitting the message to the best peers first, in the class scenario, the optimal transmission must follow the class order, beginning by the $(n_1,u_1)$ class and ending by the $(n_l,u_l)$ class. So in the minimal delay transmission, the $n_0$ initial messages are inserted in the first class and in a first phase, it will only be disseminated within that class. According to Equation~\eqref{eq:delay_dm2}, after about $\frac{1}{u_1}\ln(\frac{n_1}{n_0})$ seconds, all peers of the first class have a copy of the message.

Then, for the generic term of Equation~\eqref{eq:dmin_classes}, we just need to consider that the time $D_{i-1\rightarrow i}$ needed to fill up a class $i$, $2\leq i \leq l$, after all previous classes are already capable. $D_{i-1\rightarrow i}$ is given by Equation~\eqref{eq:dmin_multi}, with $n_0=\sum_{j=1}^{i-1}n_j$ (previous classes total size) and $n=\sum_{j=1}^{i}n_j$ (previous plus current classes size):

$$
\begin{array}{rl}
D_{i-1\rightarrow i} & =\sum_{k=\sum_{j=1}^{i-1}n_j}^{(\sum_{j=1}^{i}n_j)-1}\frac{1}{U_{k}}=\sum_{k=0}^{n_i-1}\frac{1}{U_{\sum_{j=1}^{i-1}n_j+k}}\\
& = \sum_{k=0}^{n_i-1}\frac{1}{(\sum_{j=1}^{i-1}n_ju_j)+ku_i}\\
& \approx \frac{1}{u_i}\ln(1+\frac{n_iu_i}{\sum_{j=1}^{i-1}n_ju_j})\text{.}
\end{array}$$

By summing $D(i)$ for $2\leq i\leq l$, one obtains the Equation~\eqref{eq:dmin_classes}.
\end{proof}

\paragraph*{Remark} if we have $n_1u_1\gg n_iu_i$ for all $2\leq i\leq l$ (case where the total upload capacity of the first class is far greater than the capacities of the other classes), we have a simpler approximation for $D_m$: 

\begin{equation}
D_{m}(N)\approx \frac{1}{u_1}\ln(\frac{n_1}{n_0})+\sum_{i=2}^{l}\frac{n_i}{\sum_{j=1}^{i-1}n_ju_j}\text{.}
\label{eq:dominant}
\end{equation}

In particular, if we consider, following~\cite{liu07minimum}, a two-class scenario, the second class being made of \emph{free-riders} ($u_2=0$), Equation~\eqref{eq:delay_dm2} simplifies into:
\begin{equation}
D_{m}(n)\approx \frac{1}{u}\ln(\frac{\min(n,n_1)}{n_0})+\frac{\max(n-n_1,0)}{Nu}\text{.}
\label{eq:freeriders}
\end{equation}
\noindent The first class gets the message after a logarithmic time, while it is linear for the free-rider class.

\subsection{\texorpdfstring{$(1/1)$, $(1/c)$ diffusion}{(1/1), (1/c) diffusion}}

In the diffusion scheme used for Theorem~\ref{thm:dmin}, all capable peers collaborate together to transmit the chunk to one single peer. Obviously, this approach is not sustainable because of the underlying cost for synchronizing an arbitrary great number of capable peers may be important anyhow and of the download bandwidth that the receiver peer must handle.

In practice, many systems do not rely on multi-sources capabilities and use one-to-one transmissions instead. We propose now to consider the minimal delay $D_1$ for the $(1/1)$, and compare it with the bound $D_m$.

Contrary to $D_m$, for which a simple closed formula exists, $D_1$ is hard to express directly. However, it is still feasible to compute its exact value, which is given by Algorithm~\ref{algo:r2sw}.
\begin{algorithm}
\caption{Algorithm to compute $D_1$}
\label{algo:r2sw}
\begin{algorithmic}[1]
\REQUIRE A set of $N$ upload bandwidths $u_1\geq ... \geq u_N$\newline
An integer $n_0$ (number of initial copies) \newline
A maximum value $n_{\max}$
\ENSURE $D_1(n)$ for $n\leftarrow$ $1$ to $n_{\max}$

\STATE $L \longleftarrow \text{zeros}(n_0\times 1)$ \label{algd1:initiate}
\FOR{$i\leftarrow 1$ to $n_{\max}$}
\STATE $D_1(i) \longleftarrow \min(L)$ \label{algd1:compi}
\STATE $L = L \setminus \{D_1(i)\}$\label{algd1:remove}
\IF{($i\leq N \And u_i>0$)}
\STATE $L_i = D_1(i) + \{\frac{1}{u_i},\ldots,\frac{n_{\max}-i}{u_i} \}$ \label{algd1:compadd}
\STATE $L=L\cup L_i$
\ENDIF
\ENDFOR
\STATE \textbf{return} $D_1$
\end{algorithmic}
\end{algorithm}

The idea of Algorithm~\ref{algo:r2sw} is that if one computes the times when a new copy of the chunk can be made available, greedy dissemination is always optimal for a single chunk transmission: at any time when a chunk copy ends, if the receiver of that copy is not the best peer missing the chunk, it reduces the usable bandwidth and therefore increases the delay. So the algorithm maintains a \emph{time-completion} list that indicates when copies of the chunk can be made under a bandwidth-greedy allocation. In details:
\begin{itemize}
	\item at line \ref{algd1:initiate}, the completion time list is initiated with $n_0$ values of $0$ (the $n_0$ primary copies);
	\item line \ref{algd1:compi} chooses the lowest completion available completion time and allocates the corresponding chunk copy to the best non-capable peer $i$;
	\item at line \ref{algd1:remove}, the corresponding value $D_1(i)$ is removed, without multiplicity;
	\item the times when $i$ can transmit chunks are added to the list at line \ref{algd1:compadd}.
\end{itemize}

\paragraph*{Remark} in~\cite{liu07minimum}, Liu proposed a snowball approach for computing a feasible delay. The difference between Liu's algorithm and ours is that Liu used a greedy scheduling based on the time when a peer is  to \emph{start} a chunk transmission, while we use the time when it is able to \emph{finish} a transmission. As a result, our algorithm gives the exact value of $D_1$, but the price is that the corresponding scheduling is not practical: it needs all peer to synchronize according to their respective finish deadlines, while Liu's algorithm only requires that \emph{ready} peers greedily select a destination peer. Also note that although Algorithm~\ref{algo:r2sw} provides the exact value for $D_1$, the actual behavior of the delay is difficult to analyze. In the following, we propose to give explicit bounds for $D_1$.

\begin{conjecture}
\label{conj}
The following bounds hold for $D_1$:
\begin{equation}
D_m\leq D_1< \frac{n_0}{U_{n_0}}+\frac{D_m}{\ln(2)}
\label{eq:d1_sharp}
\end{equation}
\end{conjecture}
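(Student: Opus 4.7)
The lower bound $D_m\leq D_1$ is immediate from the remark following Theorem~\ref{thm:dmin}: any $(1/1)$ diffusion scheme is in particular a valid $(\infty/1)$ scheme (with the trivial single-uploader ``collaboration''), hence bounded below by the $(\infty/1)$ optimum $D_m$.

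For the upper bound, the plan is to exhibit an explicit $(1/1)$ scheme whose completion time is at most $n_0/U_{n_0}+D_m/\ln 2$. The homogeneous case ($u_i=u$ for all $i$) pins down the shape of the bound: there $D_1=\log_2(N/n_0)/u$ while $D_m\approx\ln(N/n_0)/u$, so the conjectured factor $1/\ln 2$ is exactly the ratio attained by the natural doubling scheme in which, in round $j$, each of the $2^j n_0$ currently capable top peers transmits in parallel to one new peer. The proof must therefore show that homogeneity is the worst case for the ratio $D_1/D_m$, in line with the paper's general intuition that heterogeneity is a ``blessing'' for single-chunk delay.

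Concretely, I would first analyse the doubling scheme: round $j$ has duration $1/u_{2^j n_0}$, which I would compare to the slice $\sum_{k=2^j n_0}^{2^{j+1} n_0-1}1/U_k$ of $D_m$. In the (near-)homogeneous regime this slice is at least $\ln 2/u_{2^j n_0}$, so summing over $j$ telescopes to $D_m/\ln 2$. The additive term $n_0/U_{n_0}$ is designed to absorb a more careful initial phase, in which the fastest of the $n_0$ initial capable peers keeps transmitting sequentially to several new peers while the slower ones each complete a single transmission, so that the first doubling step costs about $n_0/U_{n_0}$ (the time for the initial peers to collectively produce $n_0$ additional chunks) rather than the naive $1/u_{n_0}\geq n_0/U_{n_0}$.

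The main obstacle is that plain doubling is strictly suboptimal in very heterogeneous scenarios: for instance, with $u_1\gg u_2$ and $n_0=1$, it wastes peer~$1$'s upload capacity from round~$1$ onwards, giving delay $\Theta(\log_2 N)$ while the RHS of the conjecture is $O(1)$. In such regimes one must instead analyse the greedy scheme produced by Algorithm~\ref{algo:r2sw}, whose continuous-limit behaviour $k'(t)\approx U_{k(t)}$ yields $D_1\approx D_m$, comfortably below $D_m/\ln 2$. Turning these two regime-dependent analyses into a single uniform bound $D_1<n_0/U_{n_0}+D_m/\ln 2$ valid for every bandwidth profile --- which is presumably why the statement is only a conjecture --- would require either a potential-function/amortisation argument directly on greedy (tracking simultaneously the capable set and the in-progress transmissions, so as to control the ``granularity loss'' that equals exactly $1/\ln 2$ at homogeneity), or the definition of a hybrid scheme interpolating between doubling and greedy, together with a proof that the worst case over all profiles indeed lies at homogeneity.
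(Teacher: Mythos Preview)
The paper does not prove the full conjecture either: it establishes only the homogeneous case and explicitly defers the general heterogeneous case to future work, supported only by experiments. In that sense your proposal and the paper are aligned --- both dispatch the lower bound trivially, both handle the homogeneous case, and both leave the general upper bound open.

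For the homogeneous case, however, the paper's argument is much shorter than the route you sketch. Rather than analysing a round-by-round doubling scheme and comparing round durations $1/u_{2^j n_0}$ to slices of $D_m$, the paper simply invokes the known closed form $D_1(n)=\frac{1}{u}\lceil\log_2(n/n_0)\rceil$ together with the elementary bound $D_m(n)=\frac{1}{u}\sum_{k=n_0}^{n-1}\frac{1}{k}\geq\frac{1}{u}\ln(n/n_0)$, from which
\[
D_1(n)<\frac{1}{u}\bigl(\log_2(\tfrac{n}{n_0})+1\bigr)=\frac{n_0}{n_0u}+\frac{1}{\ln 2}\cdot\frac{\ln(n/n_0)}{u}\leq\frac{n_0}{U_{n_0}}+\frac{D_m}{\ln 2}
\]
follows in one line. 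Your doubling analysis would reproduce this, but the detour through rounds is unnecessary once the closed form is available.

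Your discussion of the heterogeneous obstacles --- doubling wastes fast peers; greedy (Algorithm~\ref{algo:r2sw}) has the right continuous limit but no obvious uniform discrete bound; a hybrid or potential-function argument would be needed --- goes further than the paper, which offers only the heuristic that ``the more heterogeneous a system is, the more the behaviour of $D_1$ is close to $D_m$'' plus numerical evidence. Neither you nor the paper closes this gap, and you are right that this is precisely why the statement carries the label \emph{Conjecture}.
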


This conjecture expresses the fact that the price for forfeiting the multi-sources capacities (leaving the many-to-one model for the one-to-one model) is a delay increase that is up to a factor $\frac{1}{\ln(2)}$ and some constant.

\begin{proof}[Proof in the homogeneous case]
The left part of the inequality only expresses that $D_m$ is an absolute lower bound for chunk-based diffusion. For the right part, as stated by Equation~\eqref{eq:delay_dm}, we have $D_m(n)=\sum_{n_0}^{n-1}\frac{1}{ku}\geq \frac{1}{u}\ln(\frac{n}{n_0})$. On the other hand, as stated for instance in~\cite{liu07minimum}, $D_1$ is given by $D_1(n)=\frac{1}{u}\lceil\log_2(\frac{n}{n_0})\rceil$. We deduce
$$\begin{array}{rl}
D_1(n) & <	\frac{1}{u}(\log_2(\frac{n}{n_0})+1)=\frac{n_0}{n_0u}+\frac{1}{u}\frac{\ln(\frac{n}{n_0})}{\ln(2)}\\
& \leq \frac{n_0}{U_{n_0}}+\frac{D_m}{\ln(2)}
\end{array}$$
\end{proof}

To complete the proof, we should show that if we start from a homogeneous system and add some heterogeneity into it, the bounds of Equation~\eqref{eq:d1_sharp} still holds. This is confirmed by our experiments, which show that the homogeneous scenario is the one where the $\frac{n_0}{U_{n_0}}+\frac{D_m}{\ln(2)}$ bound is the tightest. In fact, it seems that the more heterogeneous a system is, the more the behavior of $D_1$ is close to $D_m$. We aim at providing a complete, rigorous proof of Conjecture~\ref{eq:d1_sharp} in a near future work.

\paragraph*{Remark} a less tight, yet easier to prove, relationship between $D_1$ and $D_m$ is
\begin{equation}
D_1< \frac{n_0}{U_{n_0}}+2D_m\text{.}
\label{eq:loosedmd1}
\end{equation}
This inequality comes from the fact that at any given moment, the quantity of \emph{raw data} present in the system (the sum of the complete chunks copies and of the partially transferred chunks) is no more than twice the amount of complete copies: this is straightforward by noticing that for each partially downloaded copy, one can associate a complete, distinct, one (owned by the sender of that copy). The additive constant $\frac{n_0}{U_{n_0}}$ insures that a quantity $2n_0$ of data is present in the system. The $2$ factor comes from the fact that after a time $\frac{2}{U_n}$, a raw quantity of at least $2n$ (more than $n$ complete copies) becomes at least $2(n+1)$ (more than $n+1$ complete copies).

In rest of the paper, however, we prefer to use the conjectured Equation~\eqref{eq:d1_sharp} instead of Equation \eqref{eq:loosedmd1} because of its tightness.

\subsubsection{\texorpdfstring{Properties of $D_1$}{Properties of D1}}

Most of the properties observed for the $(\infty/1)$ model have an equivalent in the $(1/1)$ model. This equivalent can be obtained using Conjecture~\ref{conj}. For instance, the gain of heterogeneity is given by combining Equations \eqref{eq:gain_dm} and \eqref{eq:d1_sharp}:
\begin{equation}
D_1(n)<\frac{1}{\bar{u}}+D^{\bar{u}}_1(n)\text{.}
\label{eq:gain_d1}
\end{equation}

In other words, \emph{up to some constant}, an heterogeneous system is faster than an equivalent homogeneous system. However, this constant means the delay can actually be higher. For instance, consider the four peer system with $(u_1,u_2,u_3,u_4)=(1.6,0.8,0.8,0.8)$ and $n_0=2$. It is easy to verify that $D_1(4)=1.25$ for that particular system, whereas for the equivalent homogeneous system (all peers' bandwidths equal to one) we have $D_1(4)=1$. This is a good illustration of the fact that because of quantification issues, heterogeneity is not always a blessing in the $(1/1)$ model. 

\subsubsection{\texorpdfstring{Extension to $(1/c)$ systems}{Extension to (1/c) systems}}

All the results of the $(1/1)$ systems can be straightforwardly extended to $(1/c)$ ones. Remember that the only difference is that instead of being able to sent one copy to one chunk every $\frac{1}{u_i}$ seconds, a peer $i$ can fetch up to $k$ peers with the chunk every $\frac{c}{u_i}$ seconds. In fact the only reason we have studied $(1/1)$ separately was that $(1/1)$ is a fulcrum model, more commonly used than the generic $(1/c)$ one, so we wanted to highlight it in order to clearly separate the impact of disabling multi-source capabilities and from the possibility of using parallelism.

As the reasonings are mostly the same than for the $(1/1)$, we propose to directly state the results. First, the exact value of $D_c$ can be computed by a slight modification of Algorithm~\ref{algo:r2sw}: all that is needed is to rename $D_1$ to $D_c$ and replace the line~\ref{algd1:compadd} by
\begin{algorithmic}
\STATE $$\begin{array}{rl}
L_i = D_c(i) + \{&\underbrace{\frac{c}{u_i},\ldots,\frac{c}{u_i}}_{c\text{ times}},\underbrace{\frac{2c}{u_i},\ldots,\frac{2c}{u_i}}_{c\text{ times}},\ldots\\
&\ldots,\underbrace{\frac{\lceil\frac{n_{\max}-i}{c}\rceil c}{u_i},\ldots,\frac{\lceil\frac{n_{\max}-i}{c}\rceil c}{u_i}}_{c\text{ times}}\}\text{.}
\end{array}$$
\end{algorithmic}

Then, the relationship between $D_c$ and $D_m$ is given by the following conjecture:
\begin{conjecture}
\label{conjc}
The following bounds hold for $D_c$:
\begin{equation}
D_m\leq D_1< c\frac{n_0}{U_{n_0}}+\frac{c}{\ln(1+c)}D_m\text{.}
\label{eq:dc_sharp}
\end{equation}
\end{conjecture}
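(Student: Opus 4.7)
The plan is to mirror the two-step structure used for Conjecture~\ref{conj}: establish the bound first in the homogeneous case via a closed-form expression for $D_c$, then attempt to extend it to arbitrary bandwidth distributions. The left inequality $D_m \leq D_c$ is immediate, since the many-to-one diffusion of Theorem~\ref{thm:dmin} is an absolute lower bound for any chunk-based scheme, and the $(1/c)$ model is strictly more constrained than $(\infty/1)$.

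For the right inequality in the homogeneous case with common bandwidth $u$, I would first derive the closed form $D_c(n) = \frac{c}{u}\lceil \log_{1+c}(n/n_0)\rceil$. The reasoning is the $(1/c)$ analogue of the doubling argument used for $D_1$: in each synchronous round of duration $c/u$, every capable peer produces exactly $c$ fresh copies, so the number of capable peers is multiplied by $1+c$. Starting from $n_0$ primary copies, after $k$ rounds we have $n_0(1+c)^k$ capable peers, and synchronous rounds are optimal under homogeneity (any asynchronous schedule can only delay the next $(1+c)$-multiplication threshold when all peers have the same bandwidth). From the closed form, the claimed bound follows by
\begin{equation*}
D_c(n) < \frac{c}{u}\!\left(1 + \log_{1+c}\frac{n}{n_0}\right) = \frac{c\, n_0}{n_0\, u} + \frac{c}{\ln(1+c)}\cdot\frac{\ln(n/n_0)}{u} \leq c\,\frac{n_0}{U_{n_0}} + \frac{c}{\ln(1+c)}\, D_m,
\end{equation*}
using $U_{n_0} = n_0 u$ and the lower bound $D_m(n) \geq \ln(n/n_0)/u$ from Equation~\eqref{eq:delay_dm}. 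Note that as $c \to 1$ this recovers exactly the bound of Conjecture~\ref{conj}, which is a reassuring sanity check.

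The main obstacle will be the heterogeneous extension, exactly as acknowledged by the authors for Conjecture~\ref{conj}. My preferred approach is a perturbation argument: start from the homogeneous system with common bandwidth equal to the average (or some other reference) and shift bandwidth between peers, showing that the ratio $D_c/D_m$ is maximized at homogeneity. Empirically this is plausible because introducing heterogeneity concentrates bandwidth on high-ranked peers, which the bandwidth-greedy policy implemented by the modified Algorithm~\ref{algo:r2sw} exploits aggressively; but a rigorous proof requires a careful coupling between the optimal $(1/c)$ schedule and the $(\infty/1)$ schedule, which the chunk atomicity and the non-aggregation constraint of parallel connections make delicate.

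If the perturbation route turns out to be too technical, I would fall back on a raw-data argument in the spirit of Equation~\eqref{eq:loosedmd1}: in the $(1/c)$ model each capable peer contributes at most $c$ partial copies at any given time, so the total raw data never exceeds $(1+c)$ times the number of complete copies. Tracking the evolution of raw data between thresholds $n$ and $n+1$ yields a bound of the form $D_c < c\,\frac{n_0}{U_{n_0}} + (1+c)\, D_m$, which holds in full heterogeneity but degrades the multiplicative constant from $c/\ln(1+c)$ to $1+c$. This weaker statement would at least secure the qualitative conclusion that $D_c$ remains within an $O(c)$ multiplicative factor of $D_m$ plus an additive initial term, pending a sharper proof of the conjectured constant.
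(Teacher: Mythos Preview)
Your proposal is correct and matches the paper's own treatment essentially step for step. The paper likewise proves only the homogeneous case (by the same $(1+c)$-multiplication-per-round argument yielding $D_c(n)=\frac{c}{u}\lceil\log_{1+c}(n/n_0)\rceil$), leaves the heterogeneous extension open as a conjecture supported by experiments, and offers precisely the same raw-data fallback with constant $(c+1)$ (Equation~\eqref{eq:dc_mou}) in lieu of a full proof.
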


This conjecture expresses the fact that the price for using mono-source and $c$-parallelism, compared to the optimal multi-sources-enabled model, is a delay increase that is up to a factor $\frac{c}{\ln(1+c)}$ (and some constant). It is validated by experience, and proved in the homogeneous case, whereas a bound fully proved for the general case is
\begin{equation}
D_m\leq D_1< c\frac{n_0}{U_{n_0}}+(c+1)D_m\text{.}
\label{eq:dc_mou}
\end{equation}

Lastly, the so-called gain of heterogeneity is still only guaranteed up to some constant:
\begin{equation}
D_c(n)<c\frac{n_0}{U_n0}+\log_c(\frac{n}{n_0})<c\frac{1}{\bar{u}}+D^{\bar{u}}_c(n)\text{.}
\end{equation}

\subsubsection{Example}
\label{subsec:validation}

In order to illustrate the results given in that section, we propose to consider a system of $N=10^{4}$ peers that are fetch with $n_0=5$ initial copies of a chunk. We propose the three following distribution:
\begin{itemize}
	\item a homogeneous distribution $H_0$;
	\item a heterogeneous distribution $H_1$ with $3$ bandwidth classes, and a range factor of $10$ between the highest and the lowest class;
	\item a heterogeneous distribution $H_2$ with $3$ bandwidth classes, and a range factor of $100$.
\end{itemize}

The details of the size and upload capacity of each class are expressed in Table~\ref{tab:bandwidths}. The numbers were chosen so that the average bandwidth is $1$ in the three distributions, so we can say they are equivalent distributions, except for the heterogeneity.

\begin{table}%
\begin{center}
\begin{tabular}{|c|c|c|c|}
\hline
 & $H_0$ (Homogeneous)& $H_1$ (Lightly-skewed) & $H_2$ (Skewed)\\
 \hline
$C_1$ & \multirow{3}{2cm}{~~~~$(100\%,1)$} & $(33\%,2.22)$ & $(30\%,2.92)$ \\
 \cline{1-1} \cline{3-4}
$C_2$ & & $(33\%,0.56)$ & $(40\%,0.292)$ \\
 \cline{1-1} \cline{3-4}
$C_3$ & & $(33\%,0.222)$ & $(30\%,0.0292)$ \\
 \hline
\end{tabular}
\end{center}
\caption{Relative size and upload capacity of the classes of $3$ bandwidth distributions}
\label{tab:bandwidths}
\end{table}

\begin{figure}
	\centering
		\subfloat[$H_0$ (homogeneous distribution)]{\includegraphics[width=.32\textwidth]{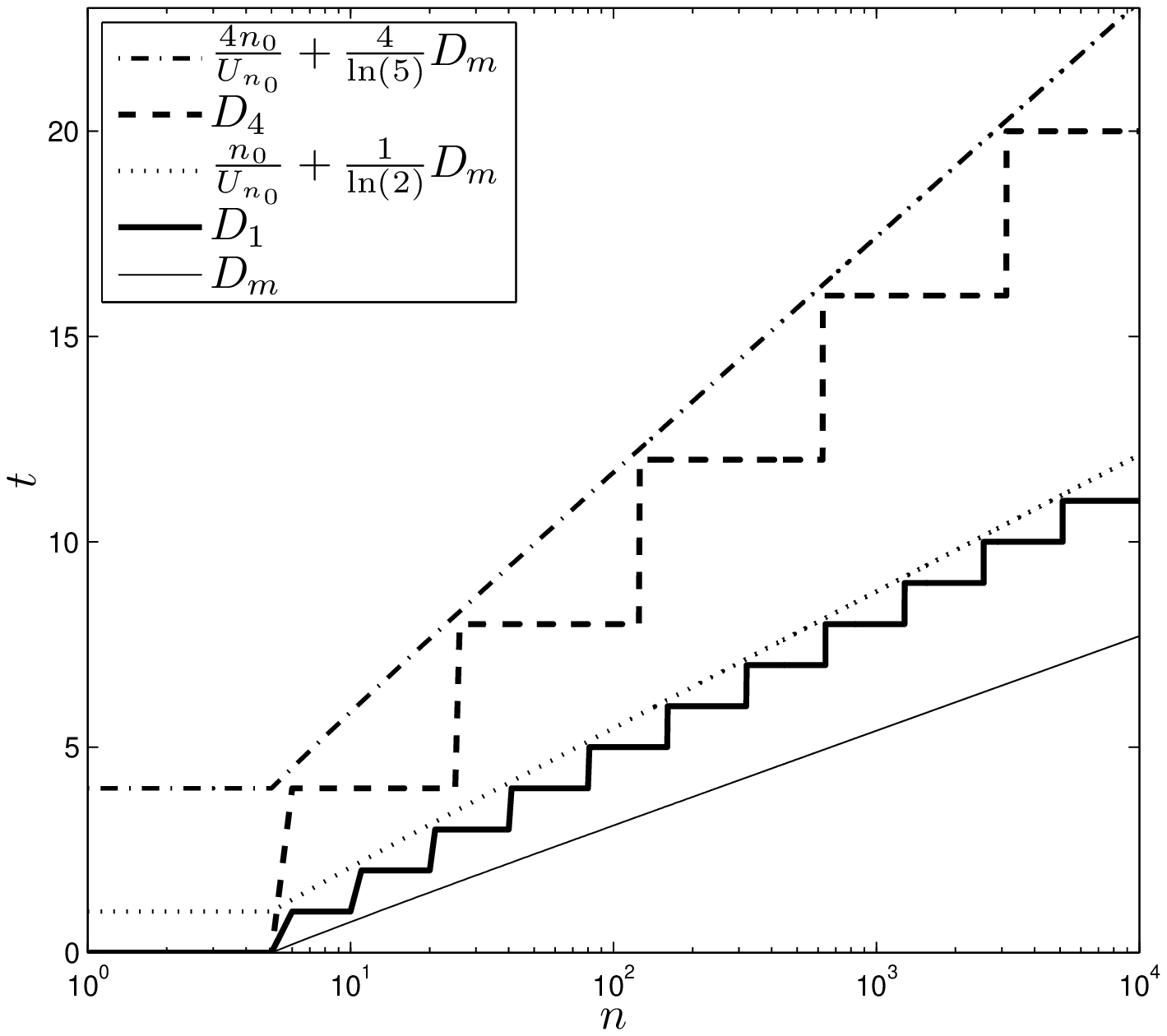}	\label{fig:Ghomo}}
		\subfloat[$H_1$ (Lightly-skewed distribution)]{\includegraphics[width=.32\textwidth]{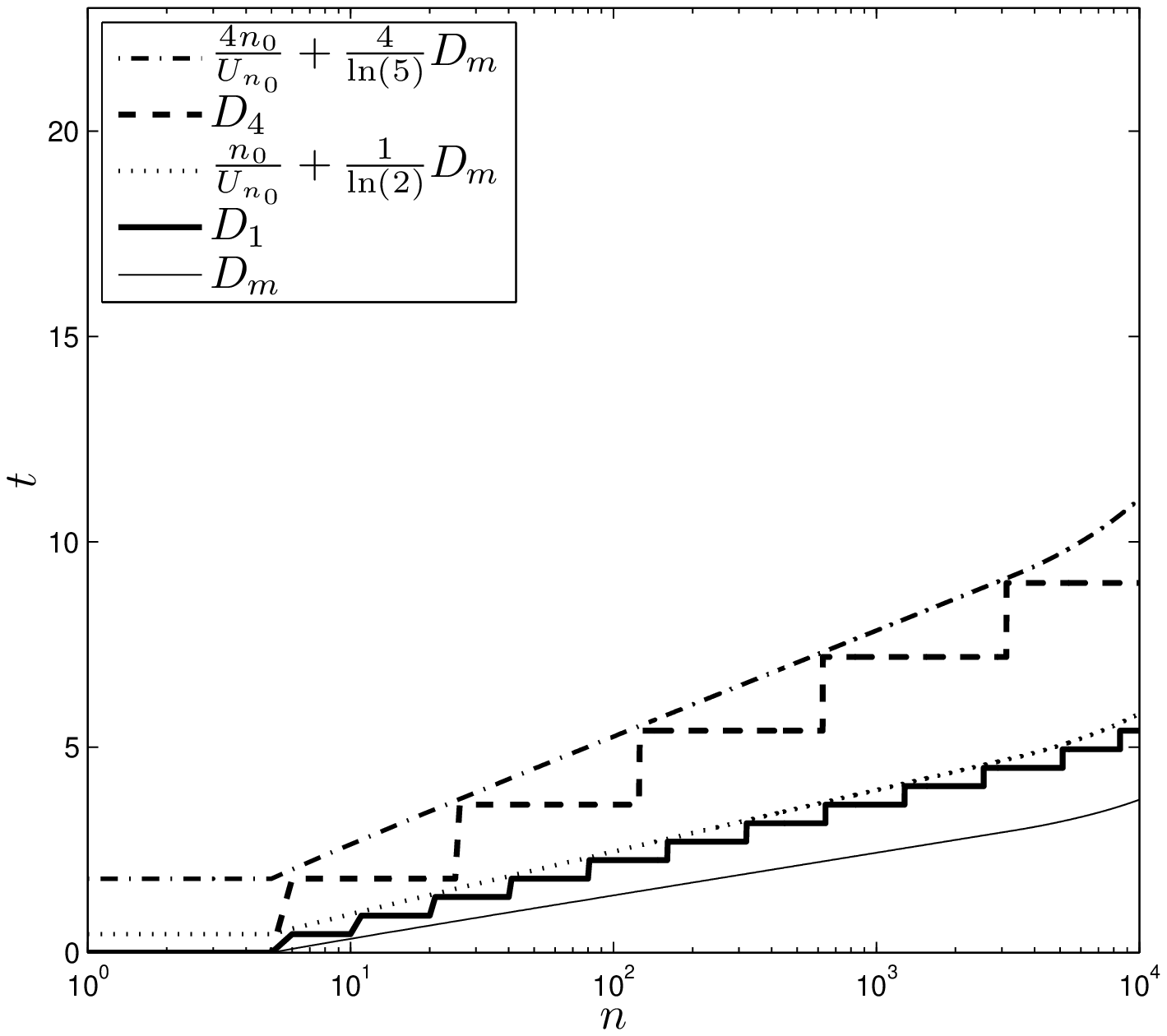}	\label{fig:Gclasses}}
		\subfloat[$H_2$ (Skewed distribution)]{\includegraphics[width=.32\textwidth]{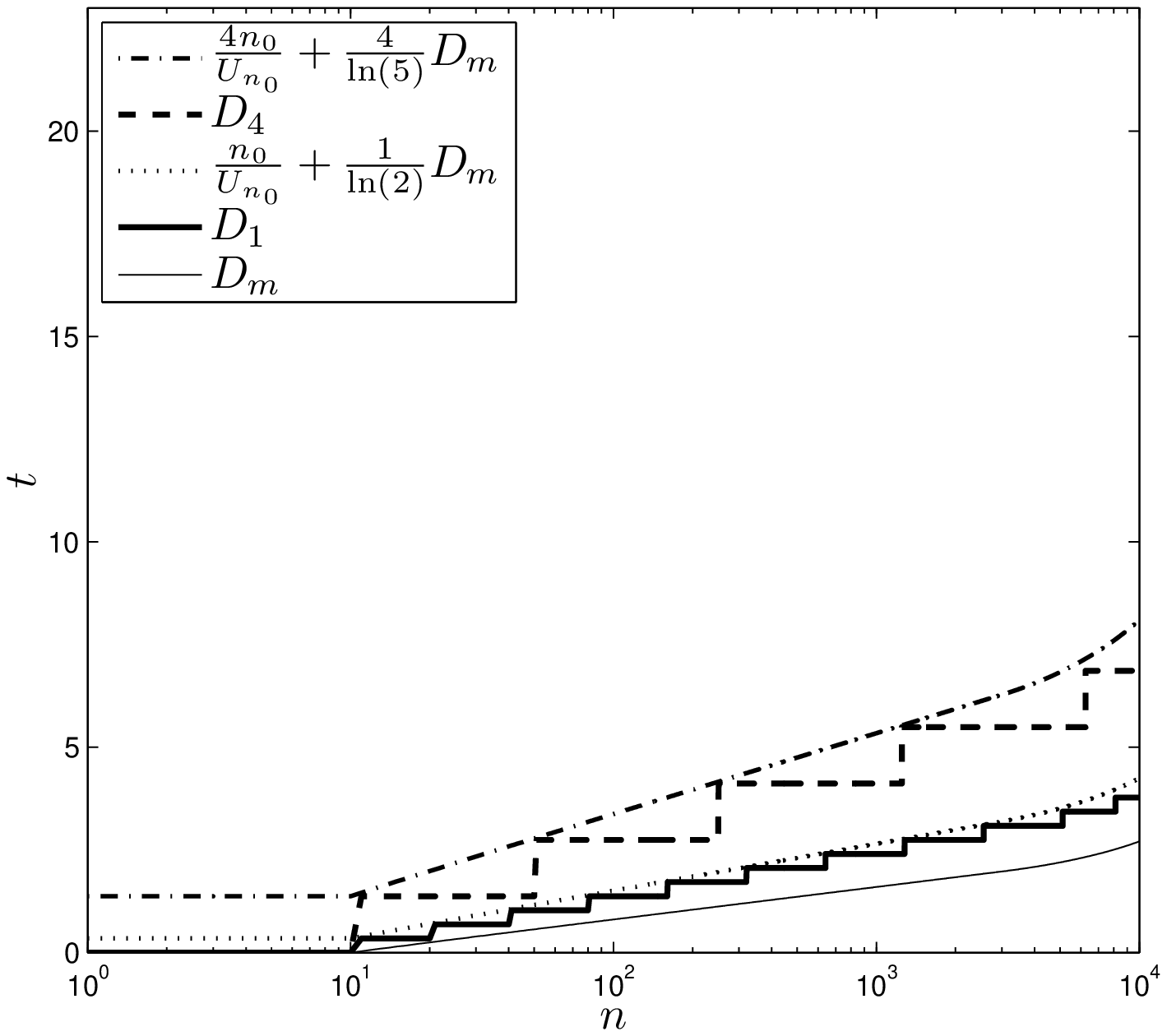}	\label{fig:Gclasses2}}
	\caption{Single chunk diffusion delays for several bandwidth distributions}
	\label{fig:lafig}
\end{figure}

The diffusion delays are displayed in Figure~\ref{fig:lafig}. For each bandwidth distribution, we displayed:
\begin{itemize}
	\item the optimal delay $D_m$, given by Equation \eqref{eq:dmin_multi};
	\item the delays $D_1$ and $D_4$ of the $(1/1)$ and $(1/4)$ models, given by the Algorithm \ref{algo:r2sw} and its modified version;
	\item the upper bounds for $D_1$ and $D_4$ given by Conjectures \ref{conj} and \ref{conjc}.
\end{itemize}

From the observed results, we can say the following:
\begin{itemize}
	\item the delays increase logarithmically for the considered distributions (or equivalently, the chunk diffusion growths exponentially with time), as predicted by Equation~\eqref{eq:dmin_classes}. Note that this logarithmic behavior is only valid for no too skewed distribution: the existence of a highly dominant class may induce an asymptotical linear behavior (cd Equation~\eqref{eq:dominant} and~\eqref{eq:freeriders});
	\item Conjectures \ref{conj} and \ref{conjc} (price of mono-source diffusion and price of parallelism) are verified. Of course, we also confronted these conjectures to a lot of distributions not discussed in this paper (power laws, exponentially distributed, uniformly distributed, with free-riders,\ldots) and they were verified in all cases);
	\item $D_1$ and $D_4$ looks like simple functions. This comes from the fact that we used bandwidth classes, so simultaneous arrivals of new copies is frequent. Nevertheless, $D_1$ and $D_c$ always look less smooth than $D_m$ even for continuous distributions, because the arrival of new chunks, which is very regular in the $(\infty/1)$ model, is more erratic in the $(1/*)$ models;
	\item Delays are faster in $H_2$ than in $H_1$, and faster in $H_2$ than in $H_0$. This is the gain of heterogeneity.
\end{itemize}

\section{Stream of chunks diffusion}
\label{sec:stream}

The issue brought by the stream of chunks problem, compared to the single chunk problem, is that each chunk is in competition with the others for using the bandwidth of the peers: when a peer is devoted to transmitting one given chunk it cannot be used for another one\footnote{An exception is the $(1/c)$ model, however we believe that transmitting different chunk in parallel is not very effective, at least w.r.t. delay.}. Therefore $D$ is a lower bound for $\tilde{D}$, but it is not necessary tight. In this section, we propose to see how $\tilde{D}$ can be estimated.

\subsection{Feasibility of a chunk-based stream}

A first natural question, before studying $\tilde{D}$, is to know whether the stream problem is feasible or not. By adapting a result from~\cite{liu08performance}, we can answer that question.

\begin{theorem}
\label{thm:feasibility}
A necessary, for any diffusion model, and sufficient, for the $(\infty/1)$ and $(1/1)$ models, condition for the stream problem to be feasible is
\begin{equation}
n_0+\frac{1}{s}\sum_{i=1}^Nu_i\geq N
\label{eq:feasibility}
\end{equation}
\end{theorem}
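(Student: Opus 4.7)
The plan is to handle the two directions separately: necessity is a simple bandwidth-conservation argument that applies to any diffusion model, and sufficiency is obtained by adapting the stripe-based construction of~\cite{liu08performance} to the chunk-based setting.

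For necessity, I would assume that some scheme achieves lossless transmission with a uniform delay bound $D$, and count the total peer-upload consumed over a large window $[0,T]$. By time $T$, every one of the $\lfloor s(T-D)\rfloor$ chunks injected before time $T-D$ has been delivered to all $N$ peers; of its $N$ copies, $n_0$ were provided by the source and the remaining $N-n_0$ were produced by peer uploads (each new copy, even when aggregated in the $(\infty/1)$ model, consumes exactly one chunk-unit of upload bandwidth). Hence the cumulated peer-upload consumed over $[0,T]$ is at least $(N-n_0)\lfloor s(T-D)\rfloor$, while it is trivially at most $T\sum_{i=1}^{N}u_i$. Dividing by $T$ and letting $T\to\infty$ yields $(N-n_0)s \le \sum_{i=1}^N u_i$, which is exactly~\eqref{eq:feasibility}.

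For sufficiency in the $(\infty/1)$ and $(1/1)$ models, the starting point is the stripe-based feasibility result of~\cite{liu08performance}: under the same bandwidth condition, the content viewed as a divisible fluid admits a static upload allocation that delivers content from the source to every peer within some bounded delay $D^\ast$. I would round this fluid schedule to chunks by grouping consecutive chunk arrivals into batches of size $B$ (each batch takes $B/s$ seconds to be injected) and treating each batch as a single macro-chunk disseminated using the fluid allocation. In the $(\infty/1)$ model, each fractional rate is realized by interleaving whole-chunk transmissions between the same sender-receiver pair, with a rounding overhead of at most one chunk per sender; in the $(1/1)$ model, I would further decompose the allocation into single-source paths, each implemented as a sequence of whole-chunk transmissions respecting the one-to-one constraint. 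In both cases the delay per chunk is bounded by $D^\ast + B/s + O(1/u_N)$.

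The main obstacle I foresee is the boundary case $n_0+\frac{1}{s}\sum_i u_i = N$, where the system has no bandwidth slack and rounding errors threaten to accumulate over time. The cleanest way to deal with this is to first prove sufficiency under strict inequality, exploiting the linear slack provided by $B$-sized batches to absorb a constant number of lost chunk-uploads per batch, and then recover the equality case either by a limit argument (infinitesimally lowering $s$ and observing that the delay stays bounded) or by a sharper accounting showing that the per-batch rounding loss is $O(1)$ rather than $O(B)$, and therefore becomes negligible as $B$ grows.
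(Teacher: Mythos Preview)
Your necessity argument is correct and is exactly the bandwidth-conservation reasoning the paper invokes (the paper states it in one line; you spell out the $T\to\infty$ averaging, which is fine).

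For sufficiency, you and the paper share the same starting point (adapt the construction of~\cite{liu08performance}), but the mechanisms differ. You propose to batch $B$ consecutive chunks into a macro-chunk and then round the fluid rates of the stripe schedule, worrying afterwards about the per-batch rounding loss and the critical case of equality in~\eqref{eq:feasibility}. The paper avoids both of these complications by a more direct discretisation: instead of rounding rates, it rounds the \emph{responsibility}. The source hands each incoming chunk (all $n_0$ copies) to a peer chosen by a scheduler that keeps the long-run fraction of chunks assigned to peer $i$ as close as possible to $u_i/\sum_j u_j$; the responsible peer then forwards the chunk one-to-one to the $N-1$ others. This is already a pure $(1/1)$ scheme, so no separate ``decompose into single-source paths'' step is needed, and the equality case is absorbed automatically by the scheduler's balancing (the same quantisation that would have caused your rounding loss now only shifts which peer is chosen, not how much bandwidth is consumed). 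The paper also notes that in the ``case~2'' situation of~\cite{liu08performance} part of the source capacity is recycled as an extra virtual peer.

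Two small points on your write-up: your delay term $O(1/u_N)$ is ill-defined when there are free-riders ($u_N=0$); the relevant quantity is $1/\min_{u_i>0}u_i$, which is exactly what appears in the paper's bound~\eqref{eq:delai_fisibilito}. And your $(1/1)$ step (``decompose the allocation into single-source paths'') is where your plan is vaguest; the paper's per-chunk responsibility trick sidesteps it entirely and is worth adopting.
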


\begin{proof}
The proof is directly derived from Theorem $1$ in \cite{liu08performance} and its proof\footnote{As claimed in \cite{liu08performance}, the technique is in fact inspired by \cite{kumar07stochastic}.}.

Equation \eqref{eq:feasibility} is necessary because it expresses the bandwidth conversation laws: the total bandwidth of the whole system (source and peers) must be greater than the $Ns$ bandwidth needed for the $N$ peers to get the stream.

We then have to show that the condition is sufficient for the $(1/1)$ model. As $(\infty/1)$ can act like $(1/1)$ (the multi-source capacity is not an obligation), this will prove the result for $(\infty/1)$ as well. In the proof in \cite{liu08performance}, the authors constructs a solution where each peer receives from the source a stripe whose rate is proportional to that peer's bandwidth. It is then in charge of distributing that stripe to all other peers. To adapt this to a chunk-based scenario, we follow the same idea: each peer will be responsible for a part of the chunks. We just have to distribute the chunks from the source to the peers according to a scheduler that ensures that the proportion of chunks sent to a given peer is as proportional as possible to its upload bandwidth (for instance, for each new chunk, send it to the peer such that the difference between the bandwidth and the chunk responsibility repartition is minimal). Note that there is situations (case $2$ in the proof in \cite{liu08performance}) where the source must distribute some chunks to all the peers. In those situations, a capacity $1$ of the source is devoted to initial allocation, while the remaining $n_0-1$ capacity is used like a virtual $(N+1)^{th}$ peer (so in those cases, the source may have to handle old chunks in addition of injecting new ones).
\end{proof}

Theorem~\ref{thm:feasibility} basically states that if the bandwidth conservation is satisfied, any chunk-based system is feasible. But while the proposed algorithm is delay-optimal in a stripe-based system, the resulting delay is terrible in a chunk-based system: if $n$ is the label of the last peer with a non-null upload bandwidth, the chunks for which $n$ is responsible (they represent a ratio $\frac{u_n}{U_n}$ of the emitted chunks) needs at least a delay $\frac{N-1}{u_n}$ to be transmitted. In fact, it may need up to $2\frac{N-1}{u_n}$: because of quantification effects, it may receive a new chunk before it has finished the distribution of the previous one. This transmission delay is lower for all other chunks, so the (loose) bound that can be derived from the feasibility theorem is
\begin{equation}
\tilde{D}\leq 2\frac{N-1}{u_n}\text{, for $u_n=\min_{u_i>0}(u_i)$.}
\label{eq:delai_fisibilito}
\end{equation}

\subsection{When heterogeneity is a curse}

One may think that the bound of Equation \eqref{eq:delai_fisibilito} is just a side-effect of the construction proposed in~\cite{liu08performance}, which is not adapted to chunk-based systems. Maybe in practice, as long as the feasibility condition is verified, $\tilde{D}$ is comparable to $D$? This idea is wrong, as shown by the following simple example: for a given $0<\epsilon<\frac{1}{2}$ consider a chunk-based system of two peers with upload bandwidths $u_1=1-\epsilon$ and $u_2=\epsilon$ respectively, $n_0=1$, $s=1$. We have $D_m=D_1=\frac{1}{1-\epsilon}$ ($u_1$ receives the peer and transmits it to $u_2$). Equation~\eqref{eq:feasibility} is verified so the system is feasible. However, when considering the stream problem, $u_1$ alone has not the necessary bandwidth to support the diffusion. Therefore at some point, the source is forced to give a chunk to $u_2$, which need $\frac{1}{\epsilon}$ for sending a chunk. Therefore we necessarily have $\tilde{D}\geq\frac{1}{\epsilon}$, so the minimal achievable delay can be arbitrary great. As a comparison, in the equivalent homogeneous case ($u_1=u_2=\frac{1}{2}$), we have $D=\tilde{D}=2$.

Then again, one could argue that this counter-example of heterogeneity's efficiency is somehow artificial, as only $2$ peers are considered and the available bandwidth is critical. The following theorem proves the contrary.

\begin{theorem}
\label{thm:linear_delay}
Let $n_0\geq 1$, $V\geq 0$, and $s>0$ be fixed. There exist $(1/1)$ systems of size $N$ that verify the following:
\begin{itemize}
	\item the source has capacity $n_0$;
	\item $U_N=\sum_{i=1}^Nu_i \geq Ns +V$ (the system is feasible and the peers have an excess bandwidth of at least $V$);
	\item $\tilde{D_1}=\Omega(N)$.
\end{itemize}
\end{theorem}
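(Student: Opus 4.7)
My plan is to exhibit an explicit family of $(1/1)$ systems that satisfy the bandwidth requirement with excess $V$, and then to use a simple bandwidth-counting argument to show every feasible scheduler must spend $\Omega(N)$ time on some chunk.

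\medskip

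\textbf{Construction.} Setting $s=1$ for clarity (the general case follows by rescaling time), for $N>2n_0+V$ I build the $N$-peer $(1/1)$ system with $n_0$ ``fast'' peers of upload $u_F=(N-2n_0-V)/n_0$ and $N-n_0$ ``slow'' peers of upload $\epsilon=2(n_0+V)/(N-n_0)$. Direct computation yields $U_N=n_0 u_F+(N-n_0)\epsilon=N+V=Ns+V$, so all three requirements of the theorem hold and Theorem~\ref{thm:feasibility} guarantees the stream is feasible.

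\medskip

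\textbf{Lower bound.} Fix an arbitrary feasible scheduler and a window of length $T$. During the window, $T$ chunks are injected and each one must reach the $N-n_0$ peers not directly served by the source, requiring $(N-n_0)T$ effective peer-to-peer transmissions in total. The fast peers can produce at most $n_0 u_F T=(N-2n_0-V)T$ transmissions, so slow peers must carry out at least $(n_0+V)T$ of the effective deliveries.

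The chunk-based assumption forbids a slow peer from starting to send a chunk before it has received a complete copy, so each slow-peer transmission lasts $1/\epsilon=(N-n_0)/(2(n_0+V))$ seconds. Any recipient served by a slow-peer transmission therefore receives the chunk no earlier than $t_c+1/\epsilon$, i.e.\ with delay at least $1/\epsilon$. Since the number of chunk-recipient pairs served by slow peers grows linearly in $T$ while each chunk has at most $N-n_0$ recipients, at least one chunk per sufficiently long window incurs such a delay. Hence $\tilde{D_1}\geq 1/\epsilon=\Omega(N)$ as $N\to\infty$, with constant depending only on $n_0,s,V$.

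\medskip

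\textbf{Main obstacle.} The delicate point is to rule out a clever scheduler in which every slow-peer transmission is shadowed by a simultaneous, faster fast-peer transmission to the same recipient, making the slow sends redundant. The bandwidth-counting bookkeeping closes this loophole: the fast peers simply do not have the aggregate upload needed to cover all $(N-n_0)T$ deliveries within $T$, so a positive fraction of (chunk, receiver) pairs can only be reached through a slow-peer transmission, and those transmissions are intrinsically $\Omega(N)$-long.
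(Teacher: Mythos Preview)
Your argument is correct and follows the same idea as the paper: build a system where the source together with the ``fast'' peers lacks enough aggregate upload to sustain the stream, forcing some slow peer to perform a non-redundant send whose duration is $\Omega(N)$. The paper's construction is marginally simpler---a single fast peer with $u_1=(N-n_0-1)s$ and $N-1$ slow peers with $u_i=\frac{(n_0+V+1)s}{N-1}$, followed by the one-line observation $n_0s+u_1<Ns$---whereas you use $n_0$ fast peers and spell out the window bandwidth accounting explicitly; your added paragraph on ``shadowing'' makes precise a point the paper leaves implicit.
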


Remember that for an homogeneous system, the two first conditions imply $\tilde{D}=O(\log(N))$: for the systems considered by the theorem, heterogeneity is indeed a curse, although the bandwidth is over-provisioned!

\begin{proof}
The idea is exactly the same than for the two-peers example: having peers with a very low upload bandwidth and showing that the system has to use them from time to time. Here we assume $N>n_0+1$ and we consider a system with source capacity $n_0$ and the following bandwidth distribution:
\begin{itemize}
	\item $u_1=(N-n_0-1)s$,
	\item $u_i=\frac{n_0+V+1}{N-1}s$ for $2\leq i \leq N$.
\end{itemize}
By construction, the two first conditions are verified. However, $n_0s+u_1<Ns$, so only the source and $u_1$ do not suffice to distribute the stream. This means that at some point, at least one peer $i>1$ must send at least one chunk to at least one other peer, which takes $\frac{1}{u_i}=\frac{N-1}{s(n_0+V+1)}=\Omega(N)$.
\end{proof}

\subsection{When heterogeneity can be a blessing}

There is at least one case where we know for sure that $\tilde{D}=D$ even for heterogeneous systems: if $D(N)\leq\frac{1}{s}$, then the system can perform the optimal diffusion of a chunk before the next one is injected in the system. There is no competition between different chunks. For instance, in the $(\infty/1)$ model, we have $D_m(N)\leq\frac{1}{\bar{u}}\ln(\frac{N}{n_0})$ (Equation \eqref{eq:gain_dm}), so if $\bar{u}\geq \ln(\frac{N}{n_0})s$, we have $\tilde{D_m}=D_m(n)$.

Of course, this implies a tremendous bandwidth over-provisioning that makes this result of little practical interest. However, the idea can lead to more reasonable conditions, as shown by the following theorem.

\begin{theorem}
\label{thm:transverse}
For a given $(\infty/1)$ system, if one can find an integer $E$ that verifies:
\begin{enumerate}
	\item the $(\infty/one)$ single-chunk transmission delay of the sub-system made of the peers $E,2E,\ldots,\lfloor\frac{N}{E}\rfloor E$ is smaller than $\frac{E}{s}$,
	\item $\bar{u}\geq s+ E\frac{U_{E-1}}{N}$,
\end{enumerate}
then we have $\tilde{D_m}\leq 2\frac{E}{s}$.
\end{theorem}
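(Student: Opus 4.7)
The plan is to cut time into consecutive windows of length $E/s$; each window receives exactly $E$ fresh chunks, and I would show that any chunk is fully delivered to every peer by the end of the window following its injection. For each batch of $E$ chunks I would run two interleaved phases, each of duration $E/s$: a \emph{dispatch phase} that drops one copy of every batch-chunk into a dedicated ``seed'' set, and a \emph{fan-out phase} that spreads each chunk from its seed to the rest of the system.

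The fan-out phase leverages condition~1. Partition the peers into the $E$ classes $C_j=\{j,j+E,j+2E,\ldots\}$ for $1\leq j\leq E$. Since $u_1\geq\ldots\geq u_N$, the class with the smallest cumulative bandwidth is $C_E=\{E,2E,\ldots,\lfloor N/E\rfloor E\}$, so condition~1 gives an $(\infty/1)$ intra-class single-chunk delay smaller than $E/s$ for \emph{every} class. I would pair the $j$-th chunk of the current batch with class $C_j$ and run inside $C_j$ the optimal scheme of Theorem~\ref{thm:dmin}, so that after one window every peer of $C_j$ holds chunk $j$. Covering the peers outside $C_j$ is done in the same window by using the top $E-1$ peers $\{1,\ldots,E-1\}$ as cross-class relays, together with a rotation of the chunk-to-class pairing across successive batches, so that every peer eventually receives every chunk within two windows of its injection.

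The dispatch phase and the cross-class relay are budgeted by condition~2. Rewritten as $U_N \geq Ns + E\, U_{E-1}$, condition~2 states that the total peer bandwidth exceeds the steady-state rate $Ns$ by at least $E\, U_{E-1}$: this surplus is exactly the extra capacity that the top $E-1$ peers need in order to ship, once per window of $E/s$ seconds, a copy of each of the $E$ fresh chunks into its seed class and to mirror the completed intra-class copies into the other classes. The arithmetic $E\cdot U_{E-1}$ precisely matches ``$E$ batch-chunks relayed by the pool of cumulative bandwidth $U_{E-1}$''.

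The main obstacle is the pointwise bandwidth bookkeeping. Peers $1,\ldots,E-1$ participate both in the dispatch of the current batch and in the fan-out of the previous one, and condition~2 must be shown to cover this double duty instantaneously, not merely on average. Verifying that the chunk-to-class rotation and the intra-class schedules of Theorem~\ref{thm:dmin} can be woven together without overshooting any $u_i$ is the crux of the argument; once this is established, the two-phase length immediately yields $\tilde D_m \leq 2E/s$.
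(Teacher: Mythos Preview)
Your decomposition into residue classes $C_1,\ldots,C_E$ and the use of condition~1 to bound the intra-class diffusion time are exactly the paper's construction. The gap is in the inter-class step and in your reading of condition~2.

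You propose that peers $1,\ldots,E-1$ serve as cross-class relays, interpreting the surplus $E\,U_{E-1}$ in $U_N\ge Ns+E\,U_{E-1}$ as their relay budget. But a copy count rules this out: over a window of length $E/s$ the pool $\{1,\ldots,E-1\}$ can emit at most $(E/s)\,U_{E-1}$ chunk copies in total, hence at most $U_{E-1}/s$ copies of each of the $E$ batch chunks. To cover the roughly $N(1-1/E)$ peers outside $C_j$ you would need $U_{E-1}/s\gtrsim N$, which condition~2 nowhere implies (take $U_{E-1}$ bounded while $N\to\infty$). Rotating the chunk-to-class pairing across batches does not help, since every chunk must still reach every peer and the total number of copies to be produced is unchanged.

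In the paper's argument condition~2 plays a different role. Each class $G_g$ handles its assigned chunk from injection to completion: first the optimal intra-diffusion within $G_g$, then the \emph{entire} class $G_g$ (not a small relay pool) pushes the chunk to the remaining peers. Each peer of $G_g$ works on that chunk from the moment it receives it until the moment it receives the next chunk $E$ later, i.e.\ for exactly $E/s$ time units; hence $G_g$ produces $(E/s)\sum_{p\in G_g}u_p$ copies in total, and this must be at least $N-n_0$. The term $E\,U_{E-1}$ is a slack bounding how far a group's bandwidth can fall below the average: since the peers are sorted one checks $\sum_{p\in G_g}u_p\ge U_N/E-U_{E-1}$, and substituting this lower bound into $(E/s)\sum_{p\in G_g}u_p\ge N-n_0$ yields precisely condition~2.

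With this ``intra-then-inter'' mechanism the pointwise bookkeeping you flag as the main obstacle evaporates: at any instant a peer handles exactly one chunk (the one currently assigned to its class, either in the intra or the inter phase), so there is no double duty to reconcile. The $2E/s$ bound then follows because the inter phase for chunk $i$ only terminates when the intra phase for chunk $i+E$ has absorbed the last peer of $G_g$, which happens no later than $(i+2E)/s$.
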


The second condition is about bandwidth provisioning, whereas the first condition is called the non-overlapping condition (cf the proof below). Of course, $E$ should be chosen as small as possible.

\begin{figure}%
\centering
\includegraphics[width=.6\textwidth]{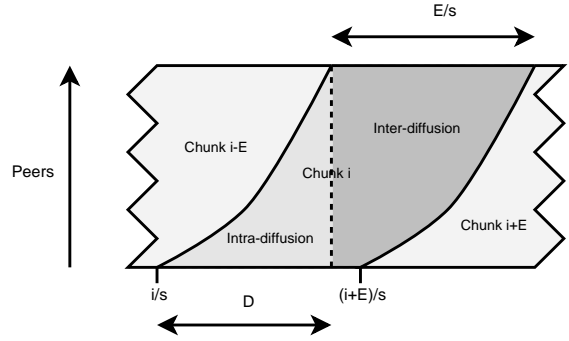}%
\caption{Principle of the \emph{intra-then-inter} chunk distribution}%
\label{fig:stream}%
\end{figure}

\begin{proof}
The idea is to construct a scheduling algorithm that \emph{protects} each chunk, so that it can be optimally diffused, at least for a few moments after it is injected. For that purpose, we split the peers into $E$ groups of peers $G_1,\ldots,G_E$, such that group $G_g$ contains all peers $i$ that verify $i\equiv g \pmod{E}$. Then we use the following \emph{intra-then-inter} diffusion algorithm, whose principle is illustrated in Figure~\ref{fig:stream}. For a given chunk $i$, we do the following
\begin{itemize}
	\item the source injects the chunk $i$ to the $n_0$ best peer of the group $G_g$ that verifies $i\equiv g \pmod{E}$. If $n_0>|G_g|$, the extra copies are given to peers from other groups;
	\item chunk $i$ is diffused as fast as possible inside the group $G_g$. This intra-diffusion ends before the next chunk $i+E$ is sent to $G_g$;
	\item as soon as the intra-diffusion is finished (we call $D_g$ the required time), all peers of $G_g$ diffuse the chunk $i$ to the other groups (inter-diffusion). Of course each peer of $G_g$ must cease to participate to the inter-diffusion of $i$ at the moment where it is involved in the intra-diffusion of $i+E$.
\end{itemize} 

If the algorithm works, the diffusion delay of each chunk is bounded by $2\frac{E}{s}$ (cf Figure \ref{fig:stream}), which proves the theorem. This requires first that the intra-diffusion of chunk $i$ is finished before chunk $i+E$ is injected (non-overlapping condition). The slower group is $E$, so the condition is verified is the single-chunk transmission delay of $G_E$ is smaller than $\frac{E}{s}$.
Then we must guarantee that $G_g$ has enough available bandwidth for diffusing the chunk to the other groups. The peers of $G_g$ can send a quantity $\frac{E}{s}\sum_{k=0}^{|G_g|-1}u_{g+kE}$ of chunk $i$, counting both the intra and inter diffusions. This leads to the bandwidth provisioning condition $\frac{E}{s}\sum_{k=0}^{|G_g|-1}u_{g+kE}\geq N-n_0$. By noticing that $\sum_{k=0}^{|G_g|-1}u_{g+kE}\geq \frac{U_N}{E}-U_{E-1}$, we get the bandwidth provisioning condition of the theorem.

\end{proof}

\subsubsection{\texorpdfstring{Extension to the $(1/c)$ model}{Extension to the (1/c) model}}

the equivalent of Theorem \ref{thm:transverse} for the $(1/c)$ model (including $c=1$) is the following:
\begin{theorem}
\label{thm:transverse1c}
For a given $(1/c)$ system, if one can find an integer $E$ that verifies
\begin{enumerate}
	\item the $(1/c)$ single-chunk transmission delay of the sub-system made of the peers $E,2E,\ldots,\lfloor\frac{N}{E}\rfloor E$ is smaller than $\frac{E}{s}$,
	\item $\bar{u}\geq s(1+\frac{c}{E})+E\frac{U_{E-1}}{N}$,
\end{enumerate}
then we have $\tilde{D_c}\leq 2\frac{E}{s}$.
\end{theorem}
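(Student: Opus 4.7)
The plan is to mimic the construction of Theorem~\ref{thm:transverse}, keeping the same partition of peers into $E$ groups $G_1,\ldots,G_E$ (with $G_g=\{i:i\equiv g\pmod{E}\}$) and the same \emph{intra-then-inter} two-phase scheduling per chunk. The only two places where the proof must be revisited are (i)~the non-overlapping condition, which now must hold in the $(1/c)$ model, and (ii)~the bandwidth bookkeeping for the inter-diffusion phase, which must account for the quantization introduced by parallel transmissions. The desired conclusion $\tilde{D_c}\leq 2\frac{E}{s}$ then follows from the same delay-per-chunk argument as in Figure~\ref{fig:stream}.

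For the non-overlapping condition, I would simply invoke hypothesis~1: the $(1/c)$ single-chunk delay on the slowest group (the one containing the peers $E,2E,\ldots$) is at most $\frac{E}{s}$, so chunk $i$ is fully disseminated inside its group before chunk $i+E$ is injected. Thus each peer of $G_g$ has a window of duration exactly $\frac{E}{s}$ to participate in the inter-diffusion of chunk $i$, between the end of the intra-diffusion of $i$ and the start of the intra-diffusion of $i+E$.

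The key new step is the bandwidth accounting under $(1/c)$. In the $(\infty/1)$ case one simply multiplied each peer's upload rate by the window length; under $(1/c)$, a peer $p$ completes a round of $c$ parallel uploads every $\frac{c}{u_p}$ seconds, so during a window of length $\frac{E}{s}$ it completes at most $\bigl\lfloor\frac{E u_p}{cs}\bigr\rfloor$ full rounds, i.e.\ at least $\frac{E u_p}{s}-c$ useful copies (the truncated last round may contribute nothing). Summing over the peers of $G_g$, the total number of inter-diffusion copies that $G_g$ can produce is at least
\begin{equation*}
\frac{E}{s}\sum_{k=0}^{|G_g|-1}u_{g+kE}\;-\;c\,|G_g|.
\end{equation*}
Requiring this quantity to be at least $N-n_0$ for every group $g$, then using the bound $\sum_{k=0}^{|G_g|-1}u_{g+kE}\geq \frac{U_N}{E}-U_{E-1}$ (already exploited in Theorem~\ref{thm:transverse}) and $|G_g|\leq\lceil N/E\rceil$, rearranges to
\begin{equation*}
\bar{u}\;\geq\; s\Bigl(1+\frac{c}{E}\Bigr)+E\,\frac{U_{E-1}}{N},
\end{equation*}
which is exactly hypothesis~2 (the $n_0 s/N$ slack being absorbed). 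The additive $\frac{cs}{E}$ is the price of parallelism: at most $c$ chunks per peer may be lost per window because their transmissions straddle the boundary with the next intra-diffusion.

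The only subtle point is to make sure the two phases can be safely interleaved: a peer must stop its inter-diffusion activity on chunk $i$ the instant it is needed for intra-diffusion of $i+E$. Since intra-diffusion is triggered at a deterministic time (the injection of $i+E$), each peer can plan its $\lfloor E u_p/(cs)\rfloor$ rounds of $c$ parallel uploads to terminate before that deadline, and any round that would overshoot is simply not launched — this is precisely what justifies the $-c$ term above. I expect this boundary argument, together with showing that the $-c\,|G_g|$ loss aggregates into $-\frac{cNs}{E}$ after the $\frac{U_N}{E}-U_{E-1}$ bound is applied, to be the only mild obstacle; once that is in place the rest of the proof is a verbatim copy of Theorem~\ref{thm:transverse}.
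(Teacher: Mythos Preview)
Your proposal follows the same strategy as the paper: reuse the intra-then-inter partition of Theorem~\ref{thm:transverse}, adjust the bandwidth bookkeeping by subtracting at most $c$ wasted copies per peer per period (those that would straddle the next intra-diffusion), and observe that this extra $c|G_g|$ term yields precisely the $\frac{cs}{E}$ supplement in condition~2. The paper's proof is essentially identical in structure and content.

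One small imprecision worth flagging: you describe the $\frac{E}{s}$ window as lying ``between the end of the intra-diffusion of $i$ and the start of the intra-diffusion of $i+E$'', but that interval has length $\frac{E}{s}-D_g$, not $\frac{E}{s}$. What makes your accounting correct anyway is that the \emph{per-peer} window devoted to chunk $i$ (from the moment the peer receives $i$ to the moment it receives $i+E$) has length exactly $\frac{E}{s}$, covering both its intra and its inter work on that chunk; the $N-n_0$ requirement then counts all needed copies, not just the inter ones. The paper makes this point explicit by stressing that each peer starts its inter-diffusion \emph{individually}, as soon as its own intra role is over, rather than waiting for the whole group --- otherwise additional bandwidth would be lost during the intra phase and the single $-c$ per peer would no longer suffice. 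With that clarification your argument goes through as written.
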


\begin{proof}
The proof is almost the same than for the previous theorem. The only difference are the following:
\begin{itemize}
	\item regarding the diffusion algorithm, each peer must start the inter-diffusion at the moment it is not involved in the intra-diffusion any more (in the $(\infty/1)$ model, all peers finish at the same time, but not here so bandwidth would be wasted if all peers wait for the end of the intra-diffusion);
	\item also, when a peer has not the time to transmit a chunk $i$ to other groups before it should be involved in the intra-diffusion of chunk $i+E$, it stays idle until that moment, for avoiding to interfere with the next intra-diffusion;
	\item as a result, a possible quantity of bandwidth may be wasted during the diffusion of $i$. However, the corresponding quantity of data is bounded by $c|G_g|$, which leads to the supplementary $\frac{c}{E}$ term in the bandwidth provisioning condition.
\end{itemize}
\end{proof}

\subsubsection{Example}

\begin{table*}[ht]%
\begin{center}
\begin{tabular}{|c|c|c|c|}
 \multicolumn{4}{c}{$H_0$ (Homogeneous)}\\
 \hline
 & $D$ & $\tilde{D}$ ($s=.9$) & $\tilde{D}$ ($s=.5$)\\
 \hline
 $(\infty/1)$ & $7.70$ & \multicolumn{2}{|c|}{N/A}\\
 \hline
  $(1/1)$ & \multicolumn{3}{|c|}{$11$}\\
 \hline
 $(1/4)$ & \multicolumn{3}{|c|}{$20$}\\
\hline
 \multicolumn{4}{c}{$H_1$ (Lightly-skewed)}\\
 \hline
 & $D$ & $\tilde{D}$ ($s=.9$) & $\tilde{D}$ ($s=.5$)\\
 \hline
 $(\infty/1)$ & $3.72$ & $8.16$ & $9.72$ \\
 \hline
  $(1/1)$ & $5.40$ & $16.51$ & $11.40$\\
 \hline
 $(1/4)$ & $9.00$ & $53.44$ & $19$\\
\hline
\multicolumn{4}{c}{$H_2$ (Skewed)}\\
 \hline
 & $D$ & $\tilde{D}$ ($s=.9$) & $\tilde{D}$ ($s=.5$)\\
 \hline
 $(\infty/1)$ & $2.70$ & $6.04$ & $6.96$ \\
 \hline
  $(1/1)$ & $4.11$ & $14.88$ & $10.11$ \\
 \hline
 $(1/4)$ & $6.86$ & $51.30$ & $16.86$ \\
 \hline
\end{tabular}
\end{center}
\caption{Delay performance examples for the three bandwidth distributions described in Table \ref{tab:bandwidths}}
\label{tab:numbers}
\end{table*}

in order to illustrate previous theorems with real numbers, we consider the three scenarios used in Section~\ref{subsec:validation}. Table~\ref{tab:numbers} gives the single chunk diffusion delays, as well as the upper bounds for $\tilde{D}$ in slightly overprovisioned ($s=0.9$) and a well overprovisioned ($s=0.5$) scenarios. Note that we choose the parameters so that a proper integer $E$ can be found in all cases.

Our main findings are the following:
\begin{itemize}
	\item for the $(\infty/1)$ model, it is easy to find an integer $E$ close to $sD_m$. This leads to a good delay performance, which for the distribution $H_2$ is better than the delay of the homogeneous case;
	\item for $(1/1)$ and $(1/4)$, the $\frac{c}{E}$ term in the overprovisioning condition can require to pick a high value of $E$ for that condition to be verified, leading to large delays. This is especially noticeable for the $(1/4)$ model and $s=0.9$;
	\item as a result, for these mono-source models, the bounds are not better that the known streaming delays in the homogeneous. Of course, this is not a proof that heterogeneity is a curse in that case: it may exist diffusion schemes that achieves lower streaming delays. But such schemes may be hard to find (and heterogeneity may be considered as a curse in that sense).
\end{itemize}

\section{Conclusion}
\label{sec:conclu}

We investigated the performance of heterogeneous, chunk-based, distributed live streaming systems. We started by studying the transmission of one single chunk and showed that heterogeneous systems tends to produce faster dissemination than equivalent homogeneous systems. We then studied the transmission of a stream of chunks, where heterogeneity can be a disadvantage because the coordination between concurrent chunk diffusions is more complex than for the homogeneous case. Although there is examples where the feasible delay can be arbitrary long, we gave sufficient conditions to link the feasible stream delay to the single-chunk transmission delay. Because of quantification effects, however, the obtained bounds may require the bandwidth to be highly heterogeneous and/or overprovisioned in order to be competitive with homogeneous scenarios, especially for the mono-source models.

\section*{Acknowledgment}

This work has been supported by the Collaborative Research Contract Mardi II between INRIA and Orange Labs, and by the European Commission through the NAPA-WINE Project, ICT Call 1 FP7-ICT-2007-1, Grant Agreement no.: 214412.

\bibliographystyle{abbrv}
\bibliography{RR-OL-2009-09-001}

\end{document}